\documentclass[11pt]{article} 
\usepackage{fullpage}
\usepackage{amsmath}
\usepackage{amssymb}
\usepackage{booktabs}
\usepackage{multirow}
\usepackage{footmisc}
\usepackage{xspace}
\usepackage[section]{placeins}
\usepackage[algosection]{algorithm2e}
\usepackage{amsfonts}
\usepackage{enumerate}

\usepackage{amsthm}

\newtheorem{theorem}{Theorem}[section]
\newtheorem{corollary}[theorem]{Corollary}
\newtheorem{lemma}[theorem]{Lemma}
\newtheorem{proposition}[theorem]{Proposition}

\newtheorem*{claim}{Claim}

\newenvironment{theorem_cite}[1]
{\begin{theorem}  {\rm (#1)}}
{\end{theorem}}

\theoremstyle{definition}

\newtheorem*{definition}{Definition}
\newtheorem*{defn}{Definition}

\newtheorem{example}[theorem]{Example}

\newtheorem*{example*}{Example}
\newtheorem*{examples*}{Examples}

\theoremstyle{remark}

\numberwithin{equation}{section}
\numberwithin{figure}{section}

\numberwithin{theorem}{section}
\numberwithin{table}{section}
\numberwithin{figure}{section}
\numberwithin{equation}{section}

\title {\bf Nonuniform Reductions and NP-Completeness}
\author{
  John M. Hitchcock
  and
  Hadi Shafei\\
  Department of Computer Science\\
  University of Wyoming
}
\date{}

\usepackage{amsfonts}
\usepackage{amsmath} 
\usepackage{amssymb}
\usepackage{enumerate}

\newcommand{\ignore}[1]{}

\newcommand{\SAT}{{\rm SAT}}

\newcommand{\PROOF}{\begin{proof}}%

\newcommand{\QED}{\end{proof}}

\newcommand{\N}{\mathbb{N}}

\newcommand{\p}{{\mathrm{p}}}
\newcommand{\ptwo}{{\p_{\thinspace\negthinspace_2}}}

\newcommand{\mup}{\mu_{\mathrm{p}}}

\newcommand{\muptwo}{\mu_\ptwo}

\newcommand{\co}[1]{\mathrm{co}#1}

\newcommand{\NP}{{\ensuremath{\mathrm{NP}}}} 
 
\newcommand{\coNP}{\co{\NP}}

\newcommand{\SUBEXP}{\mathrm{SUBEXP}}

\newcommand{\BPP}{{\rm BPP}}

\newcommand{\E}{{\rm E}}

\newcommand{\PSPACE}{{\rm PSPACE}}

\renewcommand{\P}{\ensuremath{{\mathrm P}}}

\newcommand{\EXP}{{\rm EXP}}
\newcommand{\NEXP}{{\rm NEXP}}

\newcommand{\SigmaEXPkPiEXPk}{\SigmaEXPkPiEXPk}

\newcommand{\PH}{{\rm PH}}

\newcommand{\poly}{{\rm poly}}

\newcommand{\Ppoly}{{\P/\poly}}

\newcommand{\PF}{\mathrm{PF}}

\newcommand{\calH}{{\cal H}}

\newcommand{\calR}{{\cal R}}

\newcommand{\KS}{{\mathrm{KS}} }

\newcommand{\Kt}{{\mathrm{Kt}}}

\newcommand{\io}{{\mathrm{i.o.}}}

\newcommand{\nth}{n^{\mathrm{th}}}

\renewcommand{\tt}{\mathrm{tt}}
\newcommand{\T}{\mathrm{T}}
\newcommand{\leqpm}{\leqp_\mathrm{m}}

\newcommand{\leqp}{\leq^\p}
\newcommand{\leqpT}{\leqp_\T}

\newcommand{\leqptt}{\leqp_\mathrm{tt}}

\newcommand{\m}{\mathrm{m}}

\renewcommand{\leqp}{\leq^\P}
\newcommand{\leqpadv}[2]{\leq^{\P/#1}_{#2}}
\newcommand{\leqpmadv}[1]{\leqpadv{#1}{\m}}
\newcommand{\leqpTadv}[1]{\leqpadv{#1}{\T}}
\newcommand{\leqpttadv}[1]{\leqpadv{#1}{\tt}}

\begin{document}
\maketitle

\begin{abstract}

Nonuniformity is a central concept in computational complexity with
powerful connections to circuit complexity and randomness. Nonuniform
reductions have been used to study the isomorphism conjecture for NP
and completeness for larger complexity classes. We study the power of
nonuniform reductions for NP-completeness, obtaining both separations
and upper bounds for nonuniform completeness vs uniform complessness
in NP.

Under various hypotheses, we obtain the following separations:
\begin{enumerate}
\item There is a set complete for NP under nonuniform many-one
  reductions, but not under uniform many-one reductions. This is true
  even with a single bit of nonuniform advice.

\item There is a set complete for NP under nonuniform many-one
  reductions with polynomial-size advice, but not under uniform Turing
  reductions. That is, polynomial nonuniformity is stronger than a
  polynomial number of queries.

\item For any fixed polynomial $p(n)$, there is a set complete for NP under
  uniform 2-truth-table reductions, but not under nonuniform many-one
  reductions that use $p(n)$ advice. That is, giving a uniform
  reduction a second query makes it more powerful than a nonuniform
  reduction with fixed polynomial advice.

\item There is a set complete for NP under nonuniform many-one
  reductions with polynomial advice, but not under nonuniform many-one
  reductions with logarithmic advice. This hierarchy theorem also
  holds for other reducibilities, such as truth-table and Turing.

\end{enumerate}

We also consider uniform upper bounds on nonuniform completeness.
  Hirahara (2015) showed that unconditionally every set that is
  complete for NP under nonuniform truth-table reductions that use
  logarithmic advice is also uniformly Turing-complete.
  We show that under a derandomization hypothesis, the same statement
  for truth-table reductions and truth-table completeness also holds.

\end{abstract}

\section{Introduction}

Nonuniformity is a powerful concept in computational
complexity. 
In a nonuniform computation a
  different algorithm or circuit may be used for each input size
  \cite{Shannon49}, as opposed to a uniform computation in which a
  single algorithm must be used for all inputs. Alternatively,
  nonuniform advice may be provided to a uniform algorithm --
  information that may not be computable by the algorithm but is
  computationally useful \cite{KarLip82}. 
For example, nonuniformity can be used as a substitute for randomness
\cite{Adle78}: every randomized algorithm can be replaced by a
nonuniform one ($\BPP \subseteq \Ppoly$). It is unknown whether the
same is true for $\NP$, but the Karp-Lipton Theorem \cite{KarLip82}
states that if the polynomial-time hierarchy does not collapse, then
$\NP$-complete problems have superpolynomial nonuniform complexity
($\PH$ is infinite implies $\NP \not\subseteq \Ppoly$). Hardness
versus randomness tradeoffs show that such nonuniform complexity lower
bounds imply derandomization (for example, $\EXP \not \subseteq
\Ppoly$ implies $\BPP \subseteq \io\SUBEXP$ \cite{BFNW93}).

Nonuniform computation can also be used to give reductions between
decision problems, when uniform reductions are lacking. The
Berman-Hartmanis Isomorphism Conjecture \cite{BerHar77} for $\NP$
asserts that all $\NP$-complete sets are isomorphic under
polynomial-time reductions. Progress towards relaxations of the
Isomorphism Conjecture with nonuniform reductions has been made
\cite{Agr02,AgrWat09,Hitchcock:CRNPCS} under various hypotheses.

Allender et al. \cite{Allender:power} used nonuniform reductions to
investigate the complexity of sets of Kolmogorov-random strings. They
showed that the sets $R_{\KS}$ and $R_{\Kt}$ are complete for
$\PSPACE$ and $\EXP$, respectively, under 
$\Ppoly$-truth-table reductions.  $R_{\Kt}$ is not complete under
polynomial-time truth-table reductions -- in fact, the full
polynomial-size advice is required \cite{Ronneburger:phdthesis}.

The Minimum Circuit Size Problem (MCSP) \cite{KabanetsCai00} is an
intriguing $\NP$ problem. It is not known to be
$\NP$-complete. Proving it is $\NP$-complete would imply consequences
we don't yet know how to prove, yet there is really no strong evidence
that it isn't $\NP$-complete. Recently Allender
\cite{Allender:complexityofcomplexity} has asked if the Minimum
Circuit Size Problem \cite{KabanetsCai00} is $\NP$-complete under
$\Ppoly$-Turing reductions.

Buhrman et
al. \cite{BuhrmanHescottHomerTorenvliet:NonuniformReductions} began a
systematic study of nonuniform completeness.
They proved, under a strong hypothesis on $\NP$, that every 1-$\tt$-complete set for $\NP$ is many-one complete with 1 bit of advice. This result has been known for larger classes like 
$\EXP$ and $\NEXP$ without using any advice. They also proved a separation between uniform and nonuniform reductions in $\EXP$ by showing that there exists a language that is complete in  $\EXP$ under many-one reductions that use one bit of advice, but is not 2-tt-complete \cite{BuhrmanHescottHomerTorenvliet:NonuniformReductions}. They also proved that a nonuniform reduction can be turned into a uniform one by increasing the number of queries. 

While Buhrman et
al. \cite{BuhrmanHescottHomerTorenvliet:NonuniformReductions} have
some results about nonuniform reductions in $\NP$, most of their
results are focused on larger complexity classes like $\EXP$.
Inspired by their results on $\EXP$, we work toward a similarly solid 
understanding of NP-completeness under nonuniform reductions. We give
both separation and upper bound results for a variety of nonuniform
and uniform completeness notions. We consider the standard
polynomial-time reducibilities including many-one ($\leqpm$),
truth-table ($\leqptt$), and Turing ($\leqpT$). We will consider
nonuniform reductions such as $\leqpmadv{h(n)}$ where the algorithm
computing the reduction is allowed $h(n)$ bits of advice for inputs of
size $n$.

\paragraph*{Separating Nonuniform Completeness from Uniform
  Completeness.}  

We show in Section \ref{sec:nonuniform-uniform} that nonuniform
reductions can be strictly more powerful than uniform reductions for
NP-completeness. This is necessarily done under a hypothesis, for if
$\P = \NP$, all completeness notions for $\NP$ trivially collapse. We
use the Measure Hypothesis and the NP-Machine Hypothesis -- two
hypotheses on $\NP$ that have been used in previous work to separate
NP-completeness notions \cite{Lutz:CVKL,Pavan03,Hitchcock:HHDCC}. The
Measure Hypothesis asserts that $\NP$ does not have $\p$-measure 0
\cite{Lutz:AEHNC,Lutz:QSET}, or equivalently, that $\NP$ contains a
$\p$-random set \cite{AmTeZh97,AmbMay97}. The NP-Machine Hypothesis
\cite{Hitchcock:HHDCC} has many equivalent formulations and implies
that there is an $\NP$ search problem that requires exponential time
to solve almost everywhere.

We show under the {Measure Hypothesis} that there is a
$\leqpmadv{1}$-complete set for $\NP$ that is not
$\leqpm$-complete. In other words, nonuniform many-one reductions are
stronger than many-one reductions for NP-completeness, and this holds with even a single nonuniform advice bit.

We also show that if the nonuniform reductions are allowed more
advice, we have a separation even from Turing reductions. Under the
{NP-Machine Hypothesis}, there is a $\leqpmadv{\poly}$-complete set
that is not $\leqpT$-complete. That is, polynomial-size advice makes a
many-one reduction stronger for $\NP$-completeness than a reduction
that makes a polynomial number of adaptive queries.

\paragraph*{Separating Uniform Completeness from Nonuniform
  Completeness} 
Next, in Section \ref{sec:uniform-nonuniform}, we give
evidence that uniform reductions may be strictly stronger than
nonuniform reductions for $\NP$-completeness.

We show under a hypothesis on $\NP \cap \coNP$ that adding just one
more query makes a reduction more powerful than a nonuniform one for
completeness: if $\mup(\NP \cap \coNP) \neq 0$, then for any $c \geq
1$, there is a $\leqp_{2-\tt}$-complete set that is not
$\leqpmadv{n^c}$-complete.  This is an interesting contrast to our
separation of $\leqpmadv{\poly}$-completeness from
$\leqpT$-completeness (which includes
$\leqp_{2-\tt}$-completeness). Limiting the advice on the many-one
reduction to a fixed polynomial flips the separation the other way --
and in fact, only two queries are needed. The $\mup(\NP \cap \coNP)
\neq 0$ hypothesis is admittedly strong.  However, we note that strong
hypotheses on $\NP \cap \coNP$ have been used in some prior
investigations
\cite{PavSel04,Hitchcock:ANPC,BuhrmanHescottHomerTorenvliet:NonuniformReductions}.

\paragraph*{Uniform Completeness Upper Bounds for Nonuniform Completeness}

Despite the above separations, it is possible to replace a limited
amount of nonuniformity by a uniform reduction for NP-completeness. Up to
logarithmic advice may be made uniform at the expense of a polynomial
number of queries:

\begin{enumerate}
\item A result of Hirahara \cite{Hirahara15} implies every
  $\leqpTadv{\log}$-complete set for $\NP$ is also $\leqpT$-complete.

\item Under a derandomization hypothesis ($\E$ has a problem with high
  $\NP$-oracle circuit complexity), we show that every
  $\leqpttadv{\log}$-complete set for $\NP$ is also
  $\leqptt$-complete.
The Valiant-Vazirani lemma \cite{ValVaz86} gives a
  randomized algorithm to reduce the satisfiability problem to the
  unique satisifability problem. Being able to derandomize this
  algorithm \cite{KlivMe02} yields a nonadaptive reduction.
\end{enumerate}
These upper bound results are presented in Section \ref{sec:upperbound}.

\paragraph*{Hierarchy Theorems for Nonuniform Completeness}
In Section \ref{sec:hierarchy}, we give hierarchy theorems for
nonuniform NP-completeness.
We separate
polynomial advice from logarithmic advice: if the
$\NP$-machine hypothesis is true, then there is a
$\leqpmadv{\poly}$-complete set that is not
$\leqpmadv{\log}$-complete. This also holds for other reducibilities
such as truth-table and Turing.

\section{Preliminaries}

All languages in this paper are subsets of $\{0,1\}^*$. We use the standard enumeration of binary strings, i.e. $s_0 = \lambda, s_1 = 0, s_2 = 1,s_3 = 00,...$ as an order on binary strings. For any language $A \subseteq
\{0,1\}^*$ the characteristic sequence of $A$ is defined as $\chi_A = A[s_0]A[s_1]A[s_2]...$ where 
$A[x] = 1$ or $0$ depending on whether the string $x$ belongs to $A$ or not respectively. We identify every language with its characteristic sequence.  For any binary sequence $X$ and any string $x \in \{0,1\}^*$, $X\upharpoonright x$ is the initial segment of $X$ for all strings before $x$. 

We use the standard definitions of complexity classes and well-known
reductions that can be found in \cite{BaDiGa95,Papa94}. For any two
languages $A$ and $B$ and a function $l : \N \to \N$, we say $A$ is
{\em{nonuniform polynomial-time reducible to $B$ with advice $l(n)$}},
and we write $A \leqpmadv{l(n)} B$, if there exists $f \in \PF$ and $h
: \N \to \{0,1\}^*$ with $|h(n)| \leq l(n)$ for all $n$ such that
$(\forall  x) \; x \in A \; \leftrightarrow f \big(x , h(|x|)\big) \in B. $
The string $h(|x|)$ is called the {\em{advice}}, and it only
depends on the length of the input. For a class $\calH$ of functions
mapping $\N \to \{0,1\}^*$, we say $A \leqpmadv{\calH} B$ if $A
\leqpmadv{l} B$ for some $l \in \calH$. The class $\poly$ denotes all
advice functions with length bounded by a polynomial, and $\log$ is
all advice functions with length $O(\log n)$.
We also use $\leqpmadv{1}$ for a nonuniform reduction when $|h(|x|)| = 1$.
Nonuniform reductions can similarly be defined
with respect to other kinds of reductions like Turing, truth-table,
etc.

In most of our proofs we use resource-bounded measure
\cite{Lutz:AEHNC} to state our hypotheses. In the following we provide
a brief description of this concept. For more details, see
\cite{Lutz:AEHNC,Lutz:QSET,AmbMay97}. A {\em{martingale}} is a
function $d : \{0,1\}^* \rightarrow [0, \infty)$ where $d(\lambda) >
  0$ and $\forall x \in \{0 , 1 \}^*$, $2d(x) = d(x0) + d(x1)$.
We say a martingale {\em{succeeds}} on a set $A \subseteq \{0,1\}^*$ if
$\limsup_{n \rightarrow \infty} \; d(A \upharpoonright n) = \infty,$
where $A \upharpoonright n$ is the length $n$ prefix of $A$’s characteristic sequence.
One can think of the martingale $d$ as a strategy for betting on the
consecutive bits of the characteristic sequence of $A$. The martingale
is allowed to use the first $n - 1$ bits of $A$ when betting on the
$\nth$ bit. Betting starts with the initial capital $d(\lambda) > 0$,
and $d(A \upharpoonright n - 1)$ denotes the capital after betting on
the first $(n-1)$ bits. At this stage the martingale bets some amount
$a$ where $0 \leq a \leq d(A \upharpoonright n - 1)$ that the next bit
is $0$ and the rest of the capital, i.e. $d(A \upharpoonright n - 1) -
a$, that the next bit is $1$. If the $\nth$ bit is $0$, then $d(A
\upharpoonright n) = 2a$. Otherwise, $d(A \upharpoonright n) = 2(d(A
\upharpoonright n - 1) - a)$.  For any time bound $t(n)$, we say a
language $L$ is $t(n)$-random if no $O(t(n))$-computable martingale
succeeds on $L$. A language is $\p$-random if it is $n^c$-random for
every $c$. A language is $\ptwo$-random if it is $2^{{\log n}^c}$-random for some $c$. A class of languages $C$ has $\p$-measure 0,
written $\mup(C) = 0$, if there is a $c$ such that no
language in $C$ is $n^c$-random. Similarly, $C$ has $\ptwo$-measure
0, written $\muptwo(C) = 0$, if there is a $c$ such that no language
in $C$ is $2^{\log^c n}$-random. If $C$ is closed under
$\leqpm$-reductions, then $\mup(C) = 0$ if and only if $\muptwo(C) =
0$ \cite{Lutz:WCE}.

We will use the {\em Measure Hypothesis} that $\mup(\NP) \neq 0$ and
the {\em \NP-Machine Hypothesis} \cite{Hitchcock:HHDCC}: there is an
$\NP$ machine $M$ and an $\epsilon > 0$ such that $M$ accepts $0^*$
and no $2^{n^\epsilon}$-time-bounded Turing machine computes
infinitely many accepting computations of $M$. The Measure Hypothesis
implies the $\NP$-Machine Hypothesis \cite{Hitchcock:HHDCC}.

\section{Separating Nonuniform Completeness from Uniform Completeness}\label{sec:nonuniform-uniform}
Our first theorem separates nonuniform many-one completeness with one
bit of advice from uniform many-one completeness for $\NP$, under the
measure hypothesis. Buhrman et
al. \cite{BuhrmanHescottHomerTorenvliet:NonuniformReductions} proved
the same result for $\EXP$ unconditionally.

\begin{theorem}
If $\mup(\NP) \neq 0$ then there exists a set $D \in \NP$ that is $\NP$-complete with respect to $\leqpmadv{1}$-reductions but is not $\leqpm$-complete.
\end{theorem}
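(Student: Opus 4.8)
The plan is to invoke the Measure Hypothesis only through the equivalence recorded in the preliminaries: $\mup(\NP)\neq 0$ yields a $\p$-random set $R\in\NP$. I would then build $D\in\NP$ by planting $\SAT$ into a structure whose layout at each input length is steered by bits read from $R$. Concretely, for each length $n$ I reserve two ``slots'' for encoding the length-$n$ instances of $\SAT$, and use one bit $\beta_n$ extracted from $R$ to decide which slot carries the faithful copy of $\SAT$ and which carries $R$-controlled noise. Taking the single advice bit at length $n$ to be exactly $\beta_n$, the reduction $f(x,\beta_{|x|})$ sends $x$ to the faithful slot, so that $x\in\SAT \Leftrightarrow f(x,\beta_{|x|})\in D$; this witnesses $\SAT\leqpmadv{1}D$ and hence, since $D\in\NP$ and $\SAT$ is $\NP$-complete, that $D$ is $\leqpmadv{1}$-complete for $\NP$.

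The heart of the argument is showing $D$ is not $\leqpm$-complete, equivalently $\SAT\not\leqpm D$. Here I would argue by contradiction: from a uniform many-one reduction $g$ with $x\in\SAT\Leftrightarrow g(x)\in D$ I would manufacture an $n^c$-computable martingale that succeeds on $R$, contradicting $\p$-randomness. The key leverage is that a martingale betting on the characteristic sequence of $R$ up to the length-$n$ strings is allowed time polynomial in the length $2^{O(n)}$ of that prefix, and may therefore decide $\SAT$ on instances of size $O(n)$ by brute force. Because $g$ is a \emph{fixed} function while the faithful slot moves with the random bits $\beta_n$, a correct $g$ must, on infinitely many lengths, route length-$n$ $\SAT$-instances into the $R$-controlled slot; evaluating $g$ on instances whose $\SAT$-status the martingale has already computed then reveals the corresponding bits of $R$ in advance of betting on them. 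Accumulating this predictive advantage across the relevant strings yields a martingale whose capital is unbounded on $R$, the desired contradiction.

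The main obstacle, and where I expect the real work to lie, is twofold. First, I must keep $D$ in $\NP$ even though its layout depends on the bits $\beta_n$: branching membership on ``$\beta_n=b$'' naively forces a query to the complement of the $\NP$ set $R$, which need not lie in $\NP$. The plan is to reference $R$ only positively---planting noise by \emph{adding} $R$-selected strings rather than deleting them---and to arrange the two slots so that the advised reduction still has a faithful target for \emph{every} input, not merely almost every one; reconciling ``noise added only positively'' with ``a clean slot guaranteed at every length'' is the delicate design point. Second, the reduction-to-martingale construction must be made quantitative: I need the lengths on which $g$ is forced into the $R$-controlled slot to be frequent enough, and the bits of $R$ they expose numerous enough, that the induced betting strategy gains capital at a rate sufficient to succeed, all within the $n^c$ time bound of a $\p$-martingale. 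Verifying that these betting gains compound rather than wash out against the possible evasions of $g$ is the crux on which the separation ultimately rests.
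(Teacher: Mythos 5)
Your proposal reproduces the paper's high-level plan (plant $\SAT$ into a set whose ``faithful slot'' at each length is selected by a bit of a $\p$-random $R\in\NP$, use that bit as the advice, and turn any uniform reduction into a martingale against $R$), but at both points you yourself flag as the hard parts you have no argument, and these are exactly where the paper's content lies. The first gap is the construction itself: the tension between referencing $R$ only positively (to keep $D\in\NP$) and guaranteeing a faithful slot at \emph{every} length is resolved in the paper by a specific $\vee/\wedge$ trick you do not supply. The paper takes $D=\{\langle\phi,0\rangle : \phi\in\SAT \vee 0^{|\phi|}\in R\}\cup\{\langle\phi,1\rangle : \phi\in\SAT \wedge 0^{|\phi|}\in R\}$: both slots mention $R$ positively, yet slot $1$ computes $\SAT$ exactly when $0^{|\phi|}\in R$ (conjunction with a true condition) and slot $0$ computes $\SAT$ exactly when $0^{|\phi|}\notin R$ (disjunction with a false condition), so the advice bit $R[0^{|\phi|}]$ always points to a faithful slot. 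Without this (or an equivalent) design, your completeness direction is not established.

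The second gap is more serious. Your direct ``revealed bits'' martingale cannot defeat the evasion you mention at the end. When $\SAT\leqpm D$ via $f$ and, say, $\phi\notin\SAT$ with $f(\phi)=\langle\psi,0\rangle$, the revealed bit is $R[0^{|\psi|}]$, and a martingale betting on that bit runs in time polynomial in the prefix length $\approx 2^{|\psi|}$; it can therefore \emph{find} the revealing $\phi$ by exhaustive search only if $|\phi|\le 2|\psi|$ (the paper's case $|\psi|\ge|\phi|/2$, searched in time $2^{O(|\psi|)}$). If the reduction always shrinks lengths, $|\psi|<|\phi|/2$, the revealed bits lie exponentially earlier in the characteristic sequence than any instance revealing them, so by the time a martingale could learn such a bit it has already bet on it; no accumulation scheme fixes this. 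The paper handles this case by a genuinely different second ingredient: the shrinking case, iterated $\log n$ times (terminating early whenever an output fails to halve, since then $\SAT$ of the current formula is poly-time computable), yields a $(\log n)$-$\tt$-reduction from $\SAT$ to $R$ itself, making $R$ $(\log n)$-$\tt$-complete for $\NP$; then the Buhrman--van Melkebeek theorem that $\NP$-complete sets under $n^{\alpha}$-$\tt$-reductions have $\ptwo$-measure $0$ (together with the Juedes--Lutz closure fact to pass to $\p$-measure) contradicts the $\p$-randomness of $R$. This appeal to an external measure-theoretic completeness theorem---not a direct betting argument---is what closes the case your proposal leaves open.
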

\begin{proof}
Let $R \in \NP$ be $\p$-random. We use $R$ and $\SAT$ to construct the following set:
\[ D = 
  \begin{tabular}{ccc}
  $\langle \phi, 0 \rangle$ : $\phi \in \SAT \;\vee \; 0^{|\phi|} \in
    R \rangle $ \\
  $\bigcup \langle \phi, 1 \rangle$ : $\phi \in \SAT \;\wedge \;
    0^{|\phi|} \in R \rangle$
  \end{tabular}
\]
It follows from closure properties of $\NP$ that $D \in \NP$. It is also easy to see that 
$\SAT \leqpmadv{1} D$ via $\phi \rightarrow \langle \phi , R[0^{|\phi|}] \rangle$. Note that $R[0^{|\phi|}]$ is one bit of advice, and it is $1$ or $0$ depending on whether or not $0^{|\phi|} \in R$. We will prove that $D$ is not $\leqpm$-complete for $\NP$. To get a contradiction, assume that $D$ is $\leqpm$-complete. Therefore $\SAT \leqpm D$ via some polynomial-time computable function $f$. Then
$(\forall \phi) \; \phi \in \SAT \leftrightarrow f(\phi) \in D.$
Based on the value of $\SAT[\phi]$ and the second component of $f(\phi)$ we consider four cases:
\begin{enumerate}

\item $\phi \in \SAT \; \wedge f(\phi) = \langle \psi , 0 \rangle$, for some formula $\psi$.
\item $\phi \notin \SAT \; \wedge f(\phi) = \langle \psi , 0 \rangle$, for some formula $\psi$.
\item $\phi \in \SAT \; \wedge f(\phi) = \langle \psi , 1 \rangle$, for some formula $\psi$.
\item $\phi \notin \SAT \; \wedge f(\phi) = \langle \psi , 1 \rangle$, for some formula $\psi$.

\end{enumerate}

In the second case above we have $\SAT[\phi] = \SAT[\psi] \vee
R[0^{|\psi|}]$ and $\phi \notin \SAT$. Therefore $\SAT[\psi] \vee
R[0^{|\psi|}] = 0$ which implies $R[0^{|\psi|}] = 0$. Consider the
situation where the second case happens and $|\psi| \geq
|\phi|/2$. The following  argument shows that  $R[0^{|\psi|}]$ is
computable in $2^{5|\psi|}$ time in this situation. We apply  $f$ to every string of length at most 
$2|\psi|$, looking for a formula $\phi$ of length at most $2|\psi|$ such that $f(\phi) = \langle \psi , 0 \rangle$ and $\phi \notin \SAT$. We are applying $f$ which is computable in polynomial time to at most $2^{2|\psi|+1}$ strings. This can be done in $2^{3|\psi|}$ steps. Checking if $\phi \notin \SAT$ can be done in at most $2^{2|\psi|}$ steps for each 
$\phi$. Therefore the whole algorithm takes at most $2^{5|\psi|}$ steps to terminate. If this case happens for infinitely many $\psi$'s we will have a polynomial-time martingale that succeeds on $R$ which contradicts the $\p$-randomness of $R$. As a result, there cannot be infinitely many $\phi$'s that  $\phi \notin \SAT $, $f(\phi) = \langle \psi , 0 \rangle$, and $|\psi| \geq |\phi|/2$. This is because if there are infinitely many such $\phi$'s, then there must be infinitely many $n$'s such that for each $n$ there exists a $\phi$ satisfying the above properties. Since we assumed $|\psi| \geq |\phi|/2$ it follows that there must be infinitely many such $\psi$'s, but we proved that this cannot happen. 

An analagous argument for the third case 
there cannot be infinitely many $\phi$'s that  $\phi \notin \SAT $,
$f(\phi) = \langle \psi , 0 \rangle$, and $|\psi| \geq |\phi|/2$. 
Therefore we have:
\begin{enumerate}
\item For almost every $\phi$, if $\phi \notin \SAT \; \wedge f(\phi) = \langle \psi , 0 \rangle$, then $|\psi| < |\phi|/2$. 
\item For almost every $\phi$, if $\phi \in \SAT \; \wedge f(\phi) = \langle \psi , 1 \rangle$, then $|\psi| < |\phi|/2$.
\end{enumerate}

It follows from these two facts that for almost every $\phi$, if $|\psi| \geq |\phi|/2$, then $\SAT[\phi]$ can be computed in polynomial time: \\
\begin{enumerate}
\item If $f(\phi) = \langle \psi , 0 \rangle$ and $|\psi| \geq |\phi|/2$, then $\phi \in \SAT$. 
\item If $f(\phi) = \langle \psi , 1 \rangle$ and $|\psi| \geq |\phi|/2$, then $\phi \notin \SAT$. 
\end{enumerate}

Note that the only computation required in the algorithm above is computing $f$ on $\phi$ which can be done in polynomial time.
To summarize, for every formula $\phi$ it is either the case that when we apply $f$ to $\phi$ the new formula $\psi$ satisfies $|\psi| < |\phi|/2$ or $\SAT[\phi]$ is computable in polynomial time. In the following we use this fact and the many-one reduction from $\SAT$ to $D$ to introduce a $(\log n)$-$\tt$-reduction from $\SAT$ to $R$.

The many-one reduction from $\SAT$ to $D$ implies that
$(\forall \phi) \; \phi \in \SAT \leftrightarrow f(\phi) \in D.$
  In other words:
\begin{equation}\label{second-reduction}
(\forall \phi) \; f(\phi) =   \langle \psi_1 , i \rangle 
 \;\textrm{ and }\;
  \SAT[\phi] = \SAT[\psi_1] \diamond_1 R[0^{|\psi_1|}]
\end{equation}
where $\diamond_1$ is $\vee$ or $\wedge$ when $i = 0$ or $1$ respectively.

Fix two strings $a$ and $b$ such that $a \in R$ and $b \notin R$. If $|\psi_1| \geq |\phi|/2$ then $\SAT[\phi]$ is computable in polynomial time, and our reduction maps 
$\phi$ to either $a$ or $b$ depending on $\SAT[\phi]$ being $1$ or $0$ respectively. To put it differently, the right hand side of \eqref{second-reduction} will be substituted by
$R[a]$ or $R[b]$ respectively.

On the other hand, if $|\psi_1| < |\phi|/2$ then we repeat the same process for $\psi_1$.
We apply $f$ to $\psi_1$ to get 
\begin{equation}\label{third-reduction}
\SAT[\psi_1] = \SAT[\psi_2] \diamond_2 R[0^{|\psi_2|}] 
\end{equation}
By substituting this in \eqref{second-reduction}  we will have:
\begin{equation}\label{fourth-reduction}
\SAT[\phi] = (\SAT[\psi_2] \diamond_2 R[0^{|\psi_2|}]) \diamond_1  R[0^{|\psi_1|}] 
\end{equation}
Again, if $|\psi_2| \geq |\psi_1|/2$ then $\SAT[\psi_1]$ is computable in polynomial time, and its value can be substituted in \eqref{third-reduction} to get a reduction from $\SAT$ to $R$. On the other hand, if  $|\psi_2| < |\psi_1|/2$ then we use $f$ again to find $\psi_3$ such that:
\begin{equation}\label{fifth-reduction}
\SAT[\psi_2] = \SAT[\psi_3] \diamond_3 R[0^{|\psi_3|}] 
\end{equation}
By substituting this in \eqref{fourth-reduction} we will have:
\begin{equation}\label{sixth-reduction}
\SAT[\phi] = \big((\SAT[\psi_3] \diamond_3 R[0^{|\psi_3|}] ) \diamond_2 R[0^{|\psi_2|}]\big) \diamond_1  R[0^{|\psi_1|}] 
\end{equation}
We repeat this process up to $(\log  n)$ times where $n = |\phi|$. If there exists some $i \leq (\log  n)$ such that $ |\psi_{i+1}| \geq |\psi_i|/2$, then we can compute $\SAT[\psi_i]$ in polynomial time and substitute its value in the following equation:
\begin{equation}\label{k-reduction}
\SAT[\phi] = \big((\SAT[\psi_i] \diamond_k R[0^{|\psi_i|}] ) \diamond_{i-1} R[0^{|\psi_{i-1}|}]\big)... \diamond_1  R[0^{|\psi_1|}] 
\end{equation}
This gives us an $i$-$\tt$-reduction from $\SAT$ to $R$ for some $i < (\log  n)$.

On the other hand, if $|\psi_{i+1}| < |\psi_i|/2$ for every $i \leq (\log  n)$ then we will have:
\begin{equation}\label{final-reduction}
\SAT[\phi] = \big((\SAT[\psi_{(\log  n)}] \diamond_{(\log  n)} R[0^{|\psi_{(\log  n)}|}] ) \diamond_{(\log  n)-1} R[0^{|\psi_{(\log  n)-1}|}]\big)... \diamond_1  R[0^{|\psi_1|}] 
\end{equation}
It follows from the construction that the length of $\psi_i$'s is
halved on each step. Therefore $|\psi_{(\log n)}|$ must be constant in
$n$. As a result $\SAT[\psi_{(\log n)}]$ is computable in constant
time. If we compute the value of $\SAT[\psi_{(\log n)}]$, and
substitute it in \eqref{final-reduction} we will have a $(\log
n)$-$\tt$-reduction from $\SAT$ to $R$. In any case, we have shown
that if $\SAT$ is many-one reducible to $D$, we can use this reduction
to define a polynomial time computable $(\log n)$-$\tt$-reduction from
$\SAT$ to $R$. This means that $R$ is $(\log n)$-$\tt$-complete for
$\NP$.  Buhrman and van Melkebeek \cite{BuhvMe99} showed that complete
sets for $\NP$ under $\leq^\P_{n^{\alpha}-\tt}$-reductions have
$\ptwo$-measure $0$. Since this complete degree is closed under
$\leqpm$-reductions, it also has $\p$-measure 0 \cite{Lutz:WCE}.
Therefore the $(\log n)$-$\tt$-completeness of $R$ contradicts its
$\p$-randomness, which completes the proof.
\end{proof}

This next theorem is based on a result of Hitchcock and Pavan
\cite{Hitchcock:CRNPCS} that separated strong nondeterministic
completeness from Turing completeness for $\NP$. We separate nonuniform many-one completeness with polynomial advice from Turing completeness.

\begin{theorem}\label{th:polyadv-vs-T}
If the $\NP$-machine hypothesis holds, then there exists a
$\leqpmadv{\poly}$-complete set in $\NP$ that is not $\leqpT$-complete.
\end{theorem}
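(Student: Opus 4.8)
The plan is to reuse the construction of Hitchcock and Pavan \cite{Hitchcock:CRNPCS} and to observe that, for their particular set, the strong nondeterministic many-one reduction witnessing completeness is really nothing more than a polynomial-advice many-one reduction. The same set then separates $\leqpmadv{\poly}$-completeness from $\leqpT$-completeness. Concretely, let $M$ and $\epsilon$ be as supplied by the $\NP$-machine hypothesis, so that $M$ accepts $0^*$ and for every $n$ there is an accepting computation $a_n$ of $M$ on $0^n$ whose length is polynomial in $n$. Following \cite{Hitchcock:CRNPCS}, I would work with
\[ D = \{\, \langle x, w\rangle : w \text{ is an accepting computation of } M \text{ on } 0^{|x|} \text{ and } x \in \SAT \,\}. \]
Closure properties of $\NP$ give $D \in \NP$: a machine checks in polynomial time that $w$ is an accepting computation of $M$ on $0^{|x|}$ and then guesses a satisfying assignment for $x$.

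For the completeness direction I would first recall the strong nondeterministic many-one reduction $\SAT \le^{\mathrm{SN}}_{\mathrm{m}} D$ from \cite{Hitchcock:CRNPCS}: on input $x$ it nondeterministically guesses a string $w$, verifies that $w$ is an accepting computation of $M$ on $0^{|x|}$, and outputs $\langle x, w\rangle$ on the accepting paths and a fixed non-answer otherwise. The crucial point for us is that the only object guessed depends on the \emph{length} of the input alone, namely an accepting computation of $M$ on $0^{|x|}$. Hence the very same map becomes a genuine polynomial-advice many-one reduction by taking the advice for length $n$ to be $h(n) = a_n$: the function $x \mapsto \langle x, a_{|x|}\rangle$ lies in $\PF$ once $a_{|x|}$ is provided, and since $a_{|x|}$ is a valid accepting computation we have $\langle x, a_{|x|}\rangle \in D \iff x \in \SAT$. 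As $|a_n|$ is polynomially bounded this gives $\SAT \leqpmadv{\poly} D$, so $D$ is $\leqpmadv{\poly}$-complete for $\NP$.

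For the incompleteness direction I would invoke the theorem of Hitchcock and Pavan directly: under the $\NP$-machine hypothesis their set is not $\leqpT$-complete. The idea underlying their argument is that a uniform Turing reduction of $\SAT$ (indeed of any sufficiently hard $\NP$ set) to $D$ can be simulated while every query $\langle \psi, w\rangle$ whose second coordinate is not a valid computation is answered ``no'' for free; any query the oracle answers positively exposes a genuine accepting computation of $M$, and one brute-forces $\SAT$ only on the short queries. A careful accounting then produces a machine that outputs accepting computations of $M$ infinitely often inside the $2^{n^\epsilon}$ time bound, contradicting the hypothesis.

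The step I expect to be the main obstacle is the length bookkeeping in that simulation, which is exactly the technical heart I would lean on \cite{Hitchcock:CRNPCS} to supply: on an input of length $n$ the reduction generates queries carrying candidate computations for many different lengths up to $n^{O(1)}$, and one must ensure that the extracted computations are for lengths that make the resulting search both \emph{input-matched} and fast enough — brute-forcing $\SAT$ on the short queries stays within the $2^{n^\epsilon}$ budget, but witnesses exposed at intermediate lengths need to be handled so that they contradict the almost-everywhere form of the hypothesis. On our side, the one place I would not treat \cite{Hitchcock:CRNPCS} as a black box is the completeness reduction: I would verify explicitly that the guessed object there is purely length-dependent, since that is precisely what licenses moving it into the advice and turning the strong nondeterministic reduction into a $\leqpmadv{\poly}$ reduction.
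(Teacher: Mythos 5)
Your proposal is correct and follows essentially the same route as the paper: both use the Hitchcock--Pavan set $\{\langle \phi, a\rangle : \phi \in \SAT$ and $a$ is an accepting computation of $M(0^{|\phi|})\}$, obtain $\leqpmadv{\poly}$-completeness by placing the (purely length-dependent, polynomially long) accepting computation into the advice, and cite \cite{Hitchcock:CRNPCS} as a black box for the fact that this set is not $\leqpT$-complete. The only cosmetic difference is that you phrase the completeness step as converting their strong nondeterministic reduction into an advice reduction, whereas the paper just writes down the advice reduction directly.
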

\begin{proof}
We follow the setup in \cite{Hitchcock:CRNPCS}.  Assume the
$\NP$-machine hypothesis holds. Then it can be shown there exists an
$\NP$-machine $M$ that accepts $0^*$ such that no $2^{n^3}$-time
bounded Turing machine can compute infinitely many of its
computations. Consider the following $\NP$ set:
\begin{equation}
A = \{\langle \phi , a \rangle \; | \; \phi \in \SAT \; \text{and $a$ is an accepting computation of } M(0^{|\phi|})  \}
\end{equation}
The mapping $ \phi \rightarrow \langle \phi, a \rangle$ where $a$ is
the first accepting computation of $M(0^{|\phi|})$ is a
$\leqpmadv{\poly}$-reduction from $\SAT$ to $A$. Note that $a$ only
depends on the length of $\phi$ and $|a|$ is polynomial in the
$|\phi|$. Therefore $A$ is $\leqpmadv{\poly}$-complete for $\NP$. It
is proved in \cite{Hitchcock:CRNPCS} that $A$ is not
$\leqpT$-complete.
\end{proof}

Because the measure hypothesis implies the $\NP$-machine hypothesis,
we have the following corollary.

\begin{corollary}
If $\mup(\NP) \neq 0$, then there exists a
$\leqpmadv{\poly}$-complete set in $\NP$ that is not $\leqpT$-complete.
\end{corollary}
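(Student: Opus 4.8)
The plan is to obtain the corollary as an immediate consequence of Theorem~\ref{th:polyadv-vs-T} together with the implication between the two hypotheses on $\NP$ that is already recorded in the preliminaries. Recall that, by \cite{Hitchcock:HHDCC}, the Measure Hypothesis $\mup(\NP) \neq 0$ implies the $\NP$-Machine Hypothesis. Since Theorem~\ref{th:polyadv-vs-T} already delivers exactly the desired conclusion under the $\NP$-Machine Hypothesis, essentially all that remains is to chain these two facts; there is no new combinatorial or measure-theoretic content to prove here.

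First I would invoke the implication of \cite{Hitchcock:HHDCC}: assuming $\mup(\NP) \neq 0$, the $\NP$-Machine Hypothesis holds, so there is an $\NP$ machine $M$ accepting $0^*$ and an $\epsilon > 0$ such that no $2^{n^\epsilon}$-time-bounded Turing machine computes infinitely many accepting computations of $M$. I would then note, as is done in the proof of Theorem~\ref{th:polyadv-vs-T}, that this can be boosted so that the time bound $2^{n^3}$ is ruled out, which is the precise form that theorem consumes.

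Next I would apply Theorem~\ref{th:polyadv-vs-T} directly: under the $\NP$-Machine Hypothesis it produces a set $A \in \NP$ that is $\leqpmadv{\poly}$-complete but not $\leqpT$-complete, and this $A$ is precisely the witness the corollary asks for.

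I do not expect a genuine obstacle, since the real work lives inside Theorem~\ref{th:polyadv-vs-T} (and, beneath it, the separation of \cite{Hitchcock:CRNPCS}) and inside the hypothesis implication of \cite{Hitchcock:HHDCC}. The only point requiring minor care is the amplification of the generic exponent $\epsilon$ to the specific $n^3$ bound used by Theorem~\ref{th:polyadv-vs-T}; this is a routine padding argument on the $\NP$ machine, so the hypothesis furnished by the Measure Hypothesis feeds into the theorem exactly as stated.
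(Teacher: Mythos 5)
Your proposal is correct and matches the paper's own argument exactly: the corollary is obtained by combining the fact that the Measure Hypothesis implies the $\NP$-Machine Hypothesis \cite{Hitchcock:HHDCC} with Theorem~\ref{th:polyadv-vs-T}. The only extra detail you mention---boosting the exponent $\epsilon$ to the $2^{n^3}$ bound---is already handled inside the proof of Theorem~\ref{th:polyadv-vs-T}, so nothing further is needed at the corollary level.
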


\section{Separating Uniform Completeness from Nonuniform Completeness}\label{sec:uniform-nonuniform}
Buhrman et
al. \cite{BuhrmanHescottHomerTorenvliet:NonuniformReductions} showed
there is a $\leqp_{2-\tt}$-complete set for $\EXP$ that is not
$\leqpmadv{1}$-complete. We show the same for $\NP$-completeness under
a strong hypothesis on $\NP \cap\coNP$; in fact, the set is not even
complete with many-one reductions that use a fixed polynomial amount
of advice.
In the proof, we use the construction of a
$\leqp_{2-\tt}$-complete set that was previously used to separate
$\leqp_{2-\tt}$-completeness from $\leqp_{1-\tt}$-completeness
\cite{PavSel04} and $\leqp_{2-\tt}$-autoreducibility from
$\leqp_{1-\tt}$-autoredicibility \cite{Hitchcock:ANPC}.

\begin{theorem}
If $\mup (\NP \cap \coNP) \neq 0$ then for every $c \geq 1$, there exists a set $A \in \NP$
that is $\leqp_{2-\tt}$-complete but is not $\leqpmadv{n^c}$-complete.
\end{theorem}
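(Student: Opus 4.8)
The plan is to instantiate the Pavan--Selman construction \cite{PavSel04} with a $\ptwo$-random witness and then replace their single-query lower-bound argument with an advice-robust martingale.

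First I would upgrade the hypothesis. Since $\NP\cap\coNP$ is closed under $\leqpm$, the equivalence $\mup(C)=0\iff\muptwo(C)=0$ lets me fix a set $R\in\NP\cap\coNP$ that is $2^{\log^k n}$-random, where $k$ is chosen larger than the exponents produced by the reduction below. Because $R\in\NP\cap\coNP$, both $R$ and $\overline R$ lie in $\NP$. Next I would build the candidate set
\[
A=\{\langle\phi,0\rangle:\phi\in\SAT\vee\phi\in R\}\cup\{\langle\phi,1\rangle:\phi\in\SAT\vee\phi\notin R\}.
\]
Each defining condition is a disjunction of an $\NP$ predicate with $R$ or $\overline R$, so $A\in\NP$. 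A short case analysis on the bit $R[\phi]$ shows $\SAT[\phi]=A[\langle\phi,0\rangle]\wedge A[\langle\phi,1\rangle]$, so mapping $\phi$ to the two queries $\langle\phi,0\rangle,\langle\phi,1\rangle$ together with the conjunction truth table is a $\leqp_{2-\tt}$-reduction of $\SAT$ to $A$; hence $A$ is $\leqp_{2-\tt}$-complete.

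For the separation I would argue by contradiction: if $A$ were $\leqpmadv{n^c}$-complete then in particular $R\leqpmadv{n^c}A$ via some $f\in\PF$ and advice $h$ with $|h(m)|\le m^c$. Writing $f(x,h(|x|))=\langle\psi,i\rangle$, membership in $A$ yields the pointwise identity $R[x]=\SAT[\psi]\vee\varepsilon_i R[\psi]$, where $\varepsilon_0$ is the identity and $\varepsilon_1$ is negation. From this one identity I extract \emph{certain} predictions of bits of $R$: if $\SAT[\psi]=1$ then $R[x]=1$; if $\SAT[\psi]=0$ and $\psi$ precedes $x$ then $R[x]=\varepsilon_i R[\psi]$ is forced by an already-revealed bit; and if $\SAT[\psi]=0$ with $\psi$ after $x$, the same identity instead forces the later bit $R[\psi]$ once $R[x]$ is known, and more generally couples any two sources that share a common target. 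Since $R$ is random it has density near $1/2$, so at every length a constant fraction of the inputs $x$ satisfy $R[x]=0$, which forces $\SAT[\psi]=0$; these positions are exactly the ones governed by the rules above. I would assemble the rules into a $\ptwo$-martingale that bets its entire capital on each position whose value is forced and abstains elsewhere; the required brute-force evaluations of $\SAT[\psi]$ and $\SAT[\cdot]$ cost $2^{\mathrm{poly}(|x|)}$, which is affordable at the $\ptwo$ time scale.

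The main obstacle is the nonuniform advice. I would handle it by the standard device of splitting the capital, length by length, across all $\le 2^{\ell^c}$ candidate strings for $h(\ell)$ and running the prediction strategy on each branch; the branch that follows the true advice is a genuine martingale. The key is a double-exponential-versus-polynomial gap: along the correct branch the capital grows like $2^{\Omega(2^{L})}$, since a positive fraction of the $2^{L}$ bits at length $L$ are forced, whereas guessing all advice up to length $L$ only dilutes the capital by a factor $2^{-O(L^{c+1})}$. Hence the combined $\ptwo$-martingale succeeds on $R$, contradicting its $2^{\log^k n}$-randomness and so refuting the assumed $\leqpmadv{n^c}$-completeness of $A$. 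The technical heart -- and the step I expect to be hardest -- is verifying that the prediction rules really force a positive fraction of the bits at infinitely many lengths against an adversarially chosen $f$; in particular one must control the collisions in which many sources point to a single target, which is precisely where the Pavan--Selman counting has to be adapted to the advice setting.
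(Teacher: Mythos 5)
Your construction is the De~Morgan dual of the paper's (the paper takes $A = 0(R\cap\SAT)\cup 1(\bar{R}\cap\SAT)$ with the disjunctive 2-query test, you take the disjunctive set with a conjunctive test), and your advice-splitting martingale framework is the same device the paper uses; so the skeleton is sound. However, the proposal has two genuine gaps.

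First, your case analysis has a hole exactly where an adversarial reduction would hide. Your three rules cover $\SAT[\psi]=1$, and $\SAT[\psi]=0$ with $\psi$ strictly before or strictly after $x$, but not $\psi=x$. In your disjunctive construction, a reduction with $f(x,h(|x|))=\langle x,0\rangle$ and $\SAT[x]=0$ yields the tautology $R[x]=\SAT[x]\vee R[x]$, which is consistent with either value of $R[x]$ and forces nothing. So your assertion that the positions with $R[x]=0$ ``are exactly the ones governed by the rules above'' is false: a valid reduction may send essentially all unsatisfiable strings to themselves, and those positions give your martingale no bet at all. This is not a corner case but the structural crux: the paper, whose construction is conjunctive, restricts attention to \emph{unsatisfiable} inputs precisely because there the self-referential query is informative (it forces $R[z]=0$). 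In your dual construction the correct move is to restrict to \emph{satisfiable} inputs, where the uninformative case cannot occur (for satisfiable $x$ mapped to $\langle x,\cdot\rangle$, validity itself forces $R[x]=1$).

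Second, the counting argument that you yourself defer as ``the technical heart'' and ``the step I expect to be hardest'' is the actual content of the paper's proof, not a finishing detail. The paper establishes three claims: (1) for infinitely many lengths, all but about $n^c$ of the relevant inputs must have their query pointing to a strictly \emph{later} string, since backward and self queries each force a bit and $n^c+1$ forced bits already beat the advice split; (2) collisions $f(y,a_n)=f(z,a_n)$ with $y<z$ are bounded by $n^c$, since each collision forces a bit; (3) a constant fraction of the surviving inputs lie in $R$, so their later, pairwise-distinct targets are forced, giving at least $n^c+1$ distinct betting positions and a gain of $2^{n^c+1}$ against the $2^{-n^c}$ dilution. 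Without this (or your dual of it -- which, incidentally, would be slightly simpler, since in the disjunctive setting \emph{every} satisfiable source with a later target forces its target bit to equal $R[x]$, making the density claim (3) unnecessary), there is no lower bound on the number of distinct forced positions per length, and the gain/dilution comparison cannot be made. Relatedly, your quantitative accounting (``capital grows like $2^{\Omega(2^L)}$ because a positive fraction of the $2^L$ bits at length $L$ are forced'') is both unjustified -- the self-referential escape route can absorb most strings of a given length -- and unnecessary: with the paper's per-length bookkeeping, $n^c+1$ forced bits against a $2^{-n^c}$ split, occurring at infinitely many lengths, already suffice.
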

\begin{proof}
We know that $\mup (\NP \cap \coNP) \neq 0$ implies $\muptwo (\NP \cap
\coNP) \neq 0$ \cite{Lutz:WCE}. Therefore we can assume there exists
$R \in \NP \cap \coNP$ that is $\ptwo$-random. We fix $c \geq 1$, and
define $A = 0(R \cap \SAT) \cup 1(\bar{R} \cap \SAT)$, where $\bar{R}$
is $R$'s complement.  It follows from closure properties of $\NP$ that
$A \in \NP$. We can define a polynomial-time computable
$2$-$\tt$-reduction from $\SAT$ to $A$ as follows: on input $x$ we
make two queries $0x$ and $1x$ from $A$, and we have $x \in \SAT
\leftrightarrow (0x \in A \vee 1x \in A)$. Therefore $A$ is
$\leqp_{2-\tt}$-complete in $\NP$. We will show that $A$ is not
$\leqpmadv{n^c}$-complete.  To get a contradiction, assume $A$ is
$\leqpmadv{n^c}$-complete in $\NP$. This implies that $R
\leqpmadv{n^c} A$ via functions $f \in \PF$ and $h : \N \rightarrow
\{0 , 1 \}^*$ where $(\forall n)\; |h(n)| = n^c$. In other words:
\begin{equation}\label{reduction1}
(\forall x) R[x] = A [ f\big( x , h(|x|) \big)] \;\;
 \text{where} \; |h(n)| = n^c
\end{equation}
For each length $n$ the advice $a_n$ has length $n^c$. As a result,
there are $2^{n^c}$ possibilities for $a_n$. For each length $n$ we
define $2^{n^c}$ martingales such that each martingale assumes one of
these possible strings is the actual advice for length $n$, and uses
\eqref{reduction1} to bet on $R$. We divide the capital into $2^{n^c}$
equal shares between these martingales. In the worst case, the
martingales that do not use the right advice lose their share of the
capital. We define these martingales such that the martingale that
uses the right advice multiplies its share by $2^{n^c+1}$. We will
also show that this happens for infinitely many lengths $n$, which
gives us a $\ptwo$-strategy to succeed on $R$. Note that based on the argument
above, we can only focus on the martingale that uses the right advice
for each length. To say it differently, in the rest of the proof we
assume that we know the right advice for each length, but the price
that we have to pay is to show that our martingale can multiply its
capital by $2^{n^c+1}$.

For each length $n$ we first compute $\SAT [z]$ for every string $z$ of length $n$. In particular, we are interested in the following set:
\begin{equation*}
A_n = \{ z \; | \; |z| = n \;\text{and} \; z \notin \SAT \}
\end{equation*}
If $|A_n| < n^{2c}$ we do not bet on any string of length $n$. It follows
from paddability of $\SAT$ that there must be infinitely many $n$'s
such that $|A_n| \geq n^{2c}$. Assume $n$ is a length where $|A_n| \geq
n^{2c}$, and let $a_n$ be the right advice for length $n$. For any string
$x$, let $v(0x) = v(1x) = x$. Consider the following set:
\begin{equation*}
C_n = \{ z \; |\; \; |z| = n, \; z \notin \SAT, \; \text{and} \; v \big(f(z , a_n) \big) > z \}
\end{equation*}
\begin{claim}\label{claim1}
There must be infinitely many $n$'s where $|A_n| \geq n^{2c}$ and $|C_n| \geq n^{2c} - n^c$. 
\end{claim}
\begin{proof}
Assume the claim does not hold. Then we have:
$(\forall^{\infty} n) \; |A_n| \geq n^{2c} \rightarrow |C_n| < n^{2c} - n^c$.
This means for almost every $n$ if $|A_n| \geq n^{2c}$  then there are $n^c+1$ strings of length $n$, $z_1, z_2,..., z_{n^c+1}$, satisfying the following property:
\begin{equation}\label{short1}
(\forall \; 1 \leq i \leq n^c+1) \; R[z_i] = A[f(z_i , a_n)] \;\wedge \; v(f(z_i , a_n)) \leq z_i
\end{equation}
It follows from the definition of $A$ that $A[y] = \tilde{R} \cap \SAT [v(y)]$ where $\tilde{R}$ is $R$ or $\bar{R}$ depending on whether $y$ starts with a $0$ or $1$ respectively. Therefore \eqref{short1} turns into:
\begin{equation}\label{short2}
(\forall \; 1 \leq i \leq n^c+1) \; R[z_i] = (\tilde{R} \cap \SAT)[v(f(z_i , a_n))] \;\wedge \; v(f(z_i , a_n)) \leq z_i
\end{equation}
We use \eqref{short2} to define a martingale that predicts $R[z_i]$ for every $1 \leq i \leq n^c+1$. Since we know  $R[z_i] = \tilde{R}[v(f(z_i , a_n))] \wedge \SAT[v(f(z_i , a_n))]$ our martingale computes $\tilde{R}[v(f(z_i , a_n))] \wedge \SAT[v(f(z_i , a_n))]$ and bets on $R[z_i]$ having the same value as $\tilde{R}[v(f(z_i , a_n))] \wedge \SAT[v(f(z_i , a_n))]$. Now we need to show why a polynomial time martingale has enough time to compute $\tilde{R}[v(f(z_i , a_n))] \wedge \SAT[v(f(z_i , a_n))]$.  Note that we know $ v(f(z_i , a_n)) \leq z_i$ so it is either the case that $ v(f(z_i , a_n)) < z_i$ or $ v(f(z_i , a_n)) = z_i$. In the first case, the martingale has access to $\tilde{R}[v(f(z_i , a_n))]$, and has enough time to compute $\SAT[v(f(z_i , a_n))]$. In the second case we know that $ \SAT[v(f(z_i , a_n))] = 0$ therefore $R[z_i] = 0$.  This implies that we can double the capital for each $z_i$. As a result, the capital can be multiplied by $2^{n^c+1}$. If this happens for infinitely many $n$'s we have a martingale that succeeds on $R$ which is a contradiction. This completes the proof of \rm{Claim} \ref{claim1}.
\end{proof}
The following claim states that when applying $f$ to elements of $C_n$ there cannot be many collisions. Define:
\begin{equation*}
D_n = \{ z \in C_n \; |\; (\exists \; y \in C_n) \; y < z \;\wedge \; f(y , a_n) = f(z , a_n) \}
\end{equation*}
\begin{claim}\label{collision}
There cannot be infinitely many $n$'s such that $|D_n| \geq n^c + 1$.
\end{claim}
\begin{proof}
To get a contradiction, assume there are infinitely many $n$'s such that $|D_n| \geq n^c+1$. Let $t_1, t_2,..., t_{n^c+1}$ be the first such strings. Then we have:
\begin{equation*}
(\forall \; 1 \leq i \leq n^c+1) \; (\exists \; r_i) \; r_i \in C_n \; \wedge \; r_i < t_i \; \wedge \; f(r_i , a_n) = f(t_i , a_n)
\end{equation*}
It follows that:
\begin{equation*}
(\forall \; 1 \leq i \leq n^c+1) \; (\exists \; r_i) \; r_i \in D_n \; \wedge \; r_i < t_i \; \wedge \; R[r_i] = R[t_i]
\end{equation*}
We can define a martingale that looks up the value of $R[r_i]$, and bets on $R[t_i]$ based on the equation above. This means that we can double the capital by betting on $R[t_i]$ for every $1 \leq i \leq n^c + 1$. As a result, the capital will be multiplied by $2^{n^c+1}$. If this happens for infinitely many $n$'s we will have a martingale that succeeds on $R$ which is a contradiction. This completes the proof of \rm{Claim} \ref{collision}.
\end{proof}
Assume $n$ is a length where $|C_n| \geq n^{2c} - n^c$. We have shown that there are infinitely many such $n$'s. We claim that for infinitely many of these $n$'s, since $R$ is $\ptwo$-random, there must be at least $(n^{2c} - n^c)/4 $ strings in $C_n$ that also belong to $R$.
\begin{claim}\label{randomness}
$(\forall ^{\infty} n) \; |C_n| \geq (n^{2c} - n^c) \;\rightarrow \;|C_n \cap R| \geq (n^{2c} - n^c)/4$.
\end{claim}
\begin{proof}
Assume this claim does not hold. Then we have:
\[(\exists^{\infty} n)\; |C_n| \geq n^{2c} - n^c \;\wedge \; |C_n \cap R| < (n^{2c} - n^c)/4\]
 We use this assumption to define a polynomial time martingale that succeeds on $R$.
We divide the original capital such that the martingale has $1/2n^2$ of the original capital for each length. Note that finding $n$'s where  $|C_n| \geq n^{2c} - n^c$ consists of computing $\SAT$ for every string of length $n$, and counting the number of negative answers, which can be done in at most $2^{3n}$ steps, followed by applying $f$ to these strings and comparing $v(f(z , a_n))$ and $z$, which can be done in time at most $2^{2n}$. This means a polynomial-time martingale has enough time to detect $C_n$'s where $|C_n| \geq n^{2c} - n^c$. After detecting these $C_n$'s we use a simple martingale that for every string $z$ in $C_n$ bets $2/3$ of the capital on $R[z] = 0$ and the rest on $R[z] = 1$. It is easy to verify that in the cases where $|C_n \cap R| < (n^{2c} - n^c)/4$ we win enough so the martingale succeeds on $R$. This completes the proof of \rm{Claim} \ref{randomness}.
\end{proof}
Let $n$ be a length where  $|C_n \cap R| \geq (n^{2c} - n^c)/4$, and consider the image of $C_n \cap R$ under $f(\cdot, a_n)$:
\begin{equation*}
I_n = \{f(z , a_n) \; | \; z \in C_n \cap R \}
\end{equation*}
It follows from \rm{Claim} \ref{collision} that $|I_n| \geq [(n^{2c} - n^c)/4] - n^c$. If we consider the image of $I_n$ under $v(\cdot)$ we have:
\begin{equation*}
V_n = \{v(f(z , a_n)) \; | \; z \in C_n \cap R \}
\end{equation*}
It is easy to see that $|V_n| \geq |I_n|/2$. Therefore for large enough $n$ we have
 $|V_n| \geq n^c+1$. Now if we use \eqref{reduction1} we have
 $R[z] = \tilde{R} \cap \SAT [v(f(z , a_n))].$
 We know that $z \in R$. This implies that $\tilde{R}[v(f(z , a_n))] = 1$. Therefore a martingale that bets on  $\tilde{R}[v(f(z , a_n))] = 1$ can double the capital each time. Since  $|V_n| \geq n^c+1$ this martingale multiplies the capital by $2^{n^c+1}$. As a result, we have a martingale that succeeds on $R$, which completes the proof.
\end{proof}

\section{Uniform Upper Bounds on Nonuniform Completeness}\label{sec:upperbound}
In this section, we consider whether nonuniformity can be removed in
$\NP$-completeness, at the expense of more queries.

Buhrman et
al. \cite{BuhrmanHescottHomerTorenvliet:NonuniformReductions} proved
that every $\leq^{\P/\log}_{\mathrm{T}}$-complete set for $\EXP$ is
also $\leqpT$-complete using a tableaux method. Hirahara
\cite{Hirahara15} proved a more general result that implies the same
for $\NP$.

\begin{theorem_cite}{Hirahara \cite{Hirahara15}}\label{th:upper-T}
Every $\leq^{\P/\log}_{\mathrm{T}}$-complete set in $\NP$ is $\leqpT$-complete.
\end{theorem_cite}

Valiant and Vazirani \cite{ValVaz86} proved that there exists a
randomized polynomial-time algorithm such that given any formula
$\phi$, outputs a list of formulas $l$ such that:

\begin{enumerate}
\item Every assignment that satisfies a formula in $l$ also satisfies $\phi$.
\item If $\phi$ is satisfiable, then with high probability at least one of the formulas in $l$ is uniquely satisfiable.
\end{enumerate}
Klivans and van Melkebeek \cite{KlivMe02} showed that Valiant-Vazirani lemma
can be derandomized if $\E^{\NP}$ contains a problem with exponential
$\NP$-oracle circuit complexity. This yields a deterministic
polynomial-time algorithm that given any $\phi$, outputs a list of
formulas $l$ such that:
\begin{enumerate}
\item Every assignment that satisfies a formula in $l$ also satisfies $\phi$.
\item If $\phi$ is satisfiable, then one of the formulas in $l$ is uniquely satisfiable.
\end{enumerate}

\begin{theorem}
  If $\E^{\NP}$ contains a problem with $\NP$-oracle circuit
  complexity $2^{\Omega(n)}$,
  then every $\leq^{\P/\log}_\tt$-complete
  set in $\NP$ is $\leqptt$-complete.
\end{theorem}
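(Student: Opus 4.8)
The plan is to prove $\SAT \leqptt A$, which suffices for $\leqptt$-completeness since $\SAT$ is $\leqpm$-complete for $\NP$ and a many-one reduction composed with a truth-table reduction is again a truth-table reduction. Because $A$ is $\leqpttadv{\log}$-complete, I fix a reduction $\SAT \leqpttadv{\log} A$ computed by a polynomial-time machine $M$ with advice function $h$, where $|h(\ell)| = O(\log \ell)$; write $M^A_a(\cdot)$ for the bit that $M$ outputs when run with advice string $a$, after its nonadaptive queries to $A$ are answered. The whole difficulty is that $M$ is guaranteed correct only when given the true advice $h(\ell)$ on inputs of length $\ell$, and I must eliminate this advice while keeping the reduction nonadaptive. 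The one immediate payoff of the logarithmic bound is that for each length $\ell$ there are only $2^{O(\log \ell)} = \poly(\ell)$ candidate advice strings, so I can afford to run $M$ with every candidate in parallel. The analogous statement for Turing reductions is Hirahara's theorem (Theorem~\ref{th:upper-T}), but an adaptive Turing reduction is of no use here: the entire point is to remove the advice \emph{without} introducing adaptive queries.

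Next I would invoke the derandomization hypothesis through the Klivans--van Melkebeek derandomization of Valiant--Vazirani to obtain, for each formula $\phi$, a deterministically computable list $l(\phi) = (\psi_1,\dots,\psi_k)$ with the two stated properties: every assignment satisfying some $\psi_i$ also satisfies $\phi$ (soundness), and if $\phi \in \SAT$ then some $\psi_i$ is \emph{uniquely} satisfiable (completeness). The key observation is that unique satisfiability turns witness-finding into a nonadaptive task. If $\psi_i$ has a unique satisfying assignment $w$, then for each variable $x_j$ the bit $w_j$ equals $\SAT[\psi_i \wedge x_j]$, because the only assignment satisfying $\psi_i \wedge x_j$ is $w$ itself and it does so exactly when $w_j = 1$; hence all bits of $w$ can be recovered in parallel rather than by the usual adaptive prefix search. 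I would therefore, for every $\psi_i$ in the list, every variable $x_j$, and every candidate advice string $a$ of the appropriate length, run $M$ on the formula $\psi_i \wedge x_j$ and collect the queries it poses to $A$. Since $M$ is a truth-table reduction and there are only polynomially many triples $(i,j,a)$ (the derived formulas have lengths bounded by a fixed polynomial in $|\phi|$, and each length contributes only polynomially many advice candidates), this yields a single polynomial-size batch of nonadaptive queries to $A$.

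Finally, after the answers from $A$ are in hand, the post-processing is purely deterministic and oracle-free: for each pair $(a,i)$ assemble the candidate witness $w_{a,i}$ whose $j$-th bit is $M^A_a(\psi_i \wedge x_j)$, test in polynomial time whether $w_{a,i}$ actually satisfies $\psi_i$, and output $1$ iff some pair passes this test. For correctness, if $\phi \notin \SAT$ then by soundness no $\psi_i$ is satisfiable, so no assembled string can satisfy any $\psi_i$ and the output is $0$ regardless of which advice strings are wrong; this is exactly what rules out false positives even though most candidate advice is incorrect. If $\phi \in \SAT$, completeness supplies a uniquely satisfiable $\psi_{i^*}$, and running $M$ with its \emph{true} advice recovers precisely the unique witness, which passes the test. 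Thus the output equals $\SAT[\phi]$, and every query to $A$ was determined in advance from $\phi$ alone, giving a genuine $\leqptt$-reduction. I expect the main obstacle to be this insistence on nonadaptivity: one must argue carefully that uniqueness, not mere satisfiability, is what licenses recovering the witness bits in parallel, and one must track advice lengths across all derived formulas so that trying every candidate still produces only polynomially many queries.
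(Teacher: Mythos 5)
Your overall strategy is the same as the paper's (derandomize Valiant--Vazirani via Klivans--van Melkebeek, use unique satisfiability to make witness recovery nonadaptive, enumerate the polynomially many advice candidates, and verify assembled witnesses), but there is a genuine gap in how you handle the advice. You assemble a candidate witness $w_{a,i}$ whose $j$-th bit is $M^A_a(\psi_i \wedge x_j)$, using a \emph{single} advice string $a$ for all $j$. But the advice of the reduction depends on the input length, and the formulas $\psi_i \wedge x_1, \dots, \psi_i \wedge x_{n_i}$ do not all have the same length (the encoding of the variable index alone already perturbs the length). So the bits of one candidate witness require advice strings $h(|\psi_i \wedge x_1|), h(|\psi_i \wedge x_2|), \dots$ for several \emph{different} lengths, and these need not coincide as strings. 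Consequently, in the completeness direction ($\phi \in \SAT$), no single $a$ in your enumeration is guaranteed to be the true advice for every query used to build $w_{a,i^*}$, so no candidate witness need be correct; and repairing this by enumerating one advice candidate \emph{per length} blows up to a product of polynomially many choices over up to $n_i$ lengths, which is exponential. (Your soundness direction survives, since it never uses correctness of $M$, but the reduction as written can output $0$ on satisfiable $\phi$.)

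The missing idea is the padding step that the paper performs before anything else: it works with the padded language $\widehat{\SAT} = \{\phi 1 0^n : \phi \in \SAT\} \in \NP$ and a reduction $\widehat{\SAT} \leqpttadv{\log} A$, and pads all derived formulas $\psi_i \wedge x_j$ (over all $i$ and $j$) to one common length $m$. Then every query to the advice-taking reduction has input length exactly $m$, a candidate witness is determined by a single advice string for length $m$, and the true advice $h(m)$ is guaranteed to occur among the $\poly(m)$ candidates, which is what makes the completeness direction go through. With that modification your argument matches the paper's proof (the paper presents the $\leqpmadv{1}$ case first and then notes the extension to truth-table reductions with $O(\log n)$ advice, whereas you work with the general case directly; also, checking the assembled assignments against $\psi_i$, as you do, or against $\phi$, as the paper does, are both fine by soundness).
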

\begin{proof}
Let $A$ be an arbitrary $\leqpmadv{1}$-complete set in $\NP$. This
case includes most of the important details and makes describing the
proof simpler. We will extend to $\leqpttadv{\log}$ case later.
We will define a $\leqptt$-reduction from $\SAT$ to $A$. 

We define a padded version of $\SAT$ as follows:
 \begin{equation*}\label{sat-padd}
 \widehat{\SAT} = \{\phi 1 0^n \; | \; n \in \N \;\textrm{ and }\; \phi \in \SAT \}
 \end{equation*}
Then $\widehat{\SAT} \in \NP$, so $\widehat{\SAT} \leqpmadv{1} A$ via
some $f \in \PF$ and some $h:\N \rightarrow \{0,1\}$ where
$(\forall \phi)\; \widehat{\SAT}[\phi] = A[f \big( \phi , h(|\phi|) \big)].$

We will use $\widehat{\SAT}$ to pad formulas that have different
lengths, and make them of the same length. Fix an input formula $\phi$
over $n$ Boolean variables $x_1$,...,$x_n$, and let $m \in \N$ be
large enough such that all formulas $\phi \wedge x_1$, $\phi \wedge
\neg x_1$, $\phi \wedge x_1 \wedge x_2$, $\ldots$, and $\phi \wedge
\neg x_1 \wedge \neg x_2 \wedge \cdots \wedge \neg x_n$ can be padded
into formulas of length $m$. We denote the padded version of these
formulas by putting a bar on them. For example, the padded version of
$\phi \wedge x_1$ is denoted by $\overline{\phi \wedge x_1}$.

Before describing the rest of the algorithm, observe that the process
of reducing search to decision for a Boolean formula can be done using
independent queries in the case that the formula is uniquely
satisfiable. This is due to the fact that if a formula
$\psi(y_1,\ldots,y_m)$ is uniquely satisfiable, then for each $1 \le j
\le m$ exactly one of the formulas $\psi \wedge x_j$ and $\psi \wedge
\neg x_j$ is satisfiable. Therefore the unique satisfying assignment
can be found by making $m$ independent queries to $\SAT$, i.e.  $\psi
\wedge x_1, \ldots, \psi \wedge x_m$.

Using the hypothesis to derandomize the Valiant-Vazirani algorithm
\cite{KlivMe02}, we have a deterministic algorithm that on input
$\phi(x_1,\ldots,x_n)$ outputs a list containing polynomially many
formulas $\psi_1$,...,$\psi_m$ satisfying properties described above.
For each formula $\psi_j(y_1^j,...,y_{n_j}^j)$ consider $\psi_j \wedge
y_k^j$'s for every $1 \le k \le n_j$, and use padding in
$\widehat{\SAT}$ to turn these formulas into formulas of the same
length. We denote the padded version of $\psi_j \wedge y_k^j$ by
$\psi_j^k$ for simplicity. For each $\psi_j$ we make $n_j$ independent
queries to $A$: $q_1^j = f(\psi_j^1 , 0), \ldots, q_{n_j}^j =
f(\psi_j^{n_j} , 0)$. For each one of these queries if the answer is
positive we set the respective variable to $1$ and $0$ otherwise. We
repeat this process using $1$ as advice, and we will have $2m$
assignments. We argue that $\phi$ is satisfiable if and only if at
least one of these assignments satisfies it.  If $\phi$ is not
satisfiable then obviously none of these assignments will satisfy
it. On the other hand, if $\phi \in \SAT$ then 
at least one of the $\psi_j$'s must be uniquely satisfiable. In
this case the process described above will find this unique satisfying
assignment. Again, by the Valiant-Vazirani lemma we know that every
assignment that satisfies at least one of the $\psi_j$'s must also
satisfy $\phi$, which means one of the $2m$ assignments produced by
the algorithm above will satisfy $\phi$ in the case that $\phi$ is
satisfiable. It is evident from the algorithm that the queries are
independent. It is also easy to see that the reduction runs in
polynomial time in $|\phi|$ since we are applying a polynomial-time
computable function $f$ to arguments about the same length as $\phi$,
and we are doing this $2m$ times which is polynomial in
$|\phi|$. Therefore this algorithm defines a polynomial-time
truth-table reduction from $\SAT$ to $A$.

If the nonuniform reduction in the theorem above uses $k$ bits of
advice instead of considering two cases in the proof there are $2^k$
cases to be considered. If $k \in O(\log n)$ then this can be done in
polynomial time. Also note that the nonuniform reduction can be a
truth-table reduction instead of a many-one reduction, and the same
proof still works.
\end{proof}

The measure hypothesis on $\NP$ implies that $\E^\NP$ has high
$\NP$-oracle circuit complexity
\cite{AllStr94,Lutz:MLD2,Hitchcock:SSPP}. Therefore we have the following.

\begin{corollary}
  If $\mup(\NP) \neq 0$,
  then every $\leq^{\P/\log}_\tt$-complete
  set in $\NP$ is $\leqptt$-complete.  
\end{corollary}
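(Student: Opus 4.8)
The plan is to recognize this as an immediate corollary of the preceding theorem, so that the only real content is verifying that the Measure Hypothesis supplies the theorem's derandomization hypothesis. The preceding theorem establishes that whenever $\E^\NP$ contains a problem of $\NP$-oracle circuit complexity $2^{\Omega(n)}$, every $\leqpttadv{\log}$-complete set in $\NP$ is $\leqptt$-complete. Thus the entire task reduces to showing that $\mup(\NP) \neq 0$ forces such a hard problem into $\E^\NP$, after which the theorem applies verbatim.

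For this step I would invoke the known connection between resource-bounded measure and circuit complexity recorded in \cite{AllStr94,Lutz:MLD2,Hitchcock:SSPP}, which is exactly the implication stated just before the corollary: the Measure Hypothesis on $\NP$ yields a problem in $\E^\NP$ with high $\NP$-oracle circuit complexity. The mechanism is the relativized form of Lutz's theorem that languages decidable by relatively small circuits form a $\p$-measure $0$ class; since $\mup(\NP)\neq 0$ asserts that $\NP$ is not negligible, $\NP$ cannot be swallowed by this easy class, and transferring the argument to its exponential-time scaling deposits a witness of $\NP$-oracle circuit complexity $2^{\Omega(n)}$ in $\E^\NP$. Because this is precisely the hypothesis of the preceding theorem, the corollary follows with no further combinatorial work.

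The one point that genuinely needs care is matching both the strength and the level of the lower bound. I must confirm that the bound extracted from $\mup(\NP)\neq 0$ is of the strong $2^{\Omega(n)}$ form rather than merely superpolynomial, since the derandomization of Valiant--Vazirani via \cite{KlivMe02} used in the preceding theorem requires exponential hardness. I must also confirm that the hard set lives at the $\E^\NP$ level and not in $\NP$ itself: every $\NP$ language trivially has polynomial-size $\NP$-oracle circuits, so the hardness cannot come from $\NP$ directly but must arise from the exponential-time-scaled measure of $\NP$ exhibited in the cited results. Once these two quantitative alignments are checked, the corollary is proved by citing the preceding theorem.
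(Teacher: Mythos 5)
Your proposal is correct and follows exactly the paper's own route: the corollary is obtained by combining the preceding theorem with the known fact (cited to the same works \cite{AllStr94,Lutz:MLD2,Hitchcock:SSPP}) that $\mup(\NP) \neq 0$ implies $\E^{\NP}$ contains a problem of $\NP$-oracle circuit complexity $2^{\Omega(n)}$. Your additional checks on the strength ($2^{\Omega(n)}$ rather than superpolynomial) and the level ($\E^{\NP}$ rather than $\NP$) of the hardness are sensible diligence but do not depart from the paper's argument.
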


\section{Hierarchy Theorems for Nonuniform Completeness}\label{sec:hierarchy}
We proved unconditionally that every 
$\leqpmadv{\log}$-complete
set in $\NP$ is $\leqpT$-complete. On the other hand, we showed that
under the $\NP$-machine hypothesis there exists a
$\leqpmadv{\poly}$-complete set in $\NP$ that is not
$\leqpT$-complete. This results in a separation of
$\leqpmadv{\poly}$-completeness from $\leqpmadv{\log}$-completeness
under the $\NP$-machine hypothesis.

\begin{theorem}\label{th:hierarchy}
If the $\NP$-machine hypothesis is true, then there exists a set in
$\NP$ that is $\leqpmadv{\poly}$-complete, 
but is not $\leqpTadv{\log}$-complete.
\end{theorem}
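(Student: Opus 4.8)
Theorem~\ref{th:polyadv-vs-T} gives, under the $\NP$-machine hypothesis, a $\leqpmadv{\poly}$-complete set $A$ that is not $\leqpT$-complete. On the other hand, Hirahara's Theorem~\ref{th:upper-T} states unconditionally that every $\leqpTadv{\log}$-complete set in $\NP$ is $\leqpT$-complete. These two facts are exactly complementary, and the whole theorem should fall out of chaining them together.

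Let me sketch the plan before writing the proof.

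The set $A$ from Theorem~\ref{th:polyadv-vs-T} is $\leqpmadv{\poly}$-complete, so it satisfies the first requirement directly. For the second requirement, I want to show $A$ is not $\leqpTadv{\log}$-complete. The key observation is the contrapositive of Hirahara's theorem: if $A$ were $\leqpTadv{\log}$-complete, then by Theorem~\ref{th:upper-T} it would be $\leqpT$-complete. But Theorem~\ref{th:polyadv-vs-T} tells us $A$ is *not* $\leqpT$-complete. So $A$ cannot be $\leqpTadv{\log}$-complete.

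So the entire proof is a two-line deduction. There's no real obstacle here — the heavy lifting was done in Theorem~\ref{th:polyadv-vs-T} (constructing the set and proving it is not Turing-complete, which comes from \cite{Hitchcock:CRNPCS}) and in Hirahara's theorem (the log-advice-to-uniform upper bound). The only thing to check is that the logical implications line up: $\leqpTadv{\log}$-completeness is exactly the hypothesis of Theorem~\ref{th:upper-T}, and $\leqpT$-completeness is exactly what Theorem~\ref{th:polyadv-vs-T} rules out. The one subtle point worth stating explicitly is *why* the theorem is phrased with $\leqpTadv{\log}$ (Turing with log advice) in the non-completeness conclusion rather than the weaker $\leqpmadv{\log}$: it is because Hirahara's upper bound is strong enough to handle the full Turing-with-log-advice case, so we get the stronger separation for free. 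The statement that it "also holds for truth-table and Turing" in the introduction is subsumed since $\leqpmadv{\log}$ and $\leqpttadv{\log}$ completeness both imply $\leqpTadv{\log}$-completeness.

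Here is the proof I would write.

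\begin{proof}
Assume the $\NP$-machine hypothesis. Let $A$ be the set constructed in the proof of Theorem~\ref{th:polyadv-vs-T}. We showed there that $A$ is $\leqpmadv{\poly}$-complete for $\NP$, so the first requirement holds.

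It remains to show that $A$ is not $\leqpTadv{\log}$-complete. Suppose, for the sake of contradiction, that $A$ is $\leqpTadv{\log}$-complete for $\NP$. Then by Hirahara's Theorem~\ref{th:upper-T}, $A$ would be $\leqpT$-complete for $\NP$. But Theorem~\ref{th:polyadv-vs-T} establishes that $A$ is \emph{not} $\leqpT$-complete, a contradiction. Therefore $A$ is $\leqpmadv{\poly}$-complete but not $\leqpTadv{\log}$-complete.
\end{proof}
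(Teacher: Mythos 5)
Your proof is correct and follows exactly the same route as the paper's: combine Theorem~\ref{th:polyadv-vs-T} (under the $\NP$-machine hypothesis, a $\leqpmadv{\poly}$-complete set that is not $\leqpT$-complete) with Hirahara's Theorem~\ref{th:upper-T} (every $\leqpTadv{\log}$-complete set in $\NP$ is $\leqpT$-complete) to conclude the set cannot be $\leqpTadv{\log}$-complete. The paper's version is just a more compressed statement of the identical two-step deduction.
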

\begin{proof}
Assume the NP-machine hypothesis. From Theorem \ref{th:polyadv-vs-T},
we obtain a set that is $\leqpmadv{\poly}$-complete but not
$\leqpT$-complete. By Theorem \ref{th:upper-T}, this set cannot be $\leqpTadv{\log}$-complete.
\end{proof}

We have the following corollary because the measure hypothesis implies
the NP-machine hypothesis.

\begin{corollary}
If $\mup(\NP) \neq 0$, then there exists a set in
$\NP$ that is $\leqpmadv{\poly}$-complete, 
but is not $\leqpTadv{\log}$-complete.
\end{corollary}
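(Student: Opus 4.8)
The plan is to obtain the witness set directly from the two results already established in the excerpt, combining the separation of Theorem~\ref{th:polyadv-vs-T} with the uniform upper bound of Theorem~\ref{th:upper-T} in a sandwiching argument. First I would invoke Theorem~\ref{th:polyadv-vs-T}: under the $\NP$-machine hypothesis it produces a set $A \in \NP$ that is $\leqpmadv{\poly}$-complete but is \emph{not} $\leqpT$-complete. This set already has exactly the positive property the theorem demands, namely completeness under polynomial-advice many-one reductions, so the only remaining task is to strengthen its negative property from ``not $\leqpT$-complete'' to ``not $\leqpTadv{\log}$-complete.''

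For that strengthening I would argue by contradiction. Suppose $A$ were $\leqpTadv{\log}$-complete for $\NP$. Hirahara's theorem (Theorem~\ref{th:upper-T}) asserts that every $\leq^{\P/\log}_{\mathrm{T}}$-complete set in $\NP$ is in fact $\leqpT$-complete, i.e.\ logarithmic advice on a Turing reduction can be absorbed uniformly at the cost of only polynomially many queries. Applying this to $A$ would force $A$ to be $\leqpT$-complete, contradicting the conclusion of Theorem~\ref{th:polyadv-vs-T}. Hence $A$ is not $\leqpTadv{\log}$-complete, and the same $A$ witnesses the theorem.

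I do not expect a genuine obstacle in this proof: all the technical difficulty has been pushed into the two cited results, and the present statement is essentially a two-line consequence of their juxtaposition. The substantive work lives in the Hitchcock--Pavan-style construction underlying Theorem~\ref{th:polyadv-vs-T}, which separates polynomial-advice many-one completeness from uniform Turing completeness, and in Hirahara's collapse of logarithmic-advice Turing completeness down to uniform Turing completeness. The only thing to observe here is that $\leqpTadv{\log}$-completeness is sandwiched between these two notions --- it would entail uniform $\leqpT$-completeness, which the separation explicitly forbids --- so the hierarchy separation follows immediately.
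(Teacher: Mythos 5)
Your sandwich argument is precisely the paper's: Theorem~\ref{th:polyadv-vs-T} yields a set $A \in \NP$ that is $\leqpmadv{\poly}$-complete but not $\leqpT$-complete, and Hirahara's Theorem~\ref{th:upper-T} shows that any $\leqpTadv{\log}$-complete set in $\NP$ would already be $\leqpT$-complete, so $A$ cannot be $\leqpTadv{\log}$-complete. This is, almost word for word, the paper's proof of its Theorem~\ref{th:hierarchy}.

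The one genuine omission is a hypothesis mismatch: the corollary you were asked to prove assumes the Measure Hypothesis, $\mup(\NP) \neq 0$, while your entire argument runs from the $\NP$-machine hypothesis, which is what Theorem~\ref{th:polyadv-vs-T} requires, and nowhere do you connect the two. As written, you have proved the hierarchy theorem under the $\NP$-machine hypothesis, not the stated corollary. The missing bridge is the fact, recorded in the paper's preliminaries and due to Hitchcock and Pavan \cite{Hitchcock:HHDCC}, that the Measure Hypothesis implies the $\NP$-machine hypothesis. This implication is itself a nontrivial theorem, not a formality, so it must be invoked explicitly; indeed the paper's own derivation of the corollary consists exactly of this observation followed by an appeal to Theorem~\ref{th:hierarchy}. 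With that single sentence added, your proof is complete and coincides with the paper's.
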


We note that while Theorem \ref{th:hierarchy} is stated for many-one
vs. Turing, it applies to any reducibility in between.

\begin{corollary}
If the $\NP$-machine hypothesis is true, then for any reducibility
$\calR$ where $\leqpm$-reducibility implies $\calR$-reducility and
$\calR$-reducility implies $\leqpT$-reducibility, there is a set in
$\NP$ that is $\leq^{\P/\poly}_\calR$-complete, but is not
$\leq^{\P/\log}_\calR$-complete.
\end{corollary}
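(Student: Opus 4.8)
The plan is to reuse exactly the two ingredients behind Theorem \ref{th:hierarchy}, observing that the only properties of $\leqpm$ and $\leqpT$ that were used there are that one sits below $\calR$ and the other above it. First I would invoke Theorem \ref{th:polyadv-vs-T} to obtain, under the $\NP$-machine hypothesis, a set $A \in \NP$ that is $\leqpmadv{\poly}$-complete but not $\leqpT$-complete. This single set $A$ will serve as the witness for the corollary, for every admissible $\calR$ simultaneously.

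For the positive direction, I would argue that $A$ is $\leq^{\P/\poly}_\calR$-complete. Since $\leqpm$-reducibility implies $\calR$-reducibility, any many-one reduction is in particular an $\calR$-reduction, and attaching the same polynomial advice function to both notions shows that $\leqpmadv{\poly}$ implies $\leq^{\P/\poly}_\calR$. Hence the $\leqpmadv{\poly}$-completeness of $A$ upgrades immediately to $\leq^{\P/\poly}_\calR$-completeness.

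For the negative direction, I would proceed by contradiction: suppose $A$ is $\leq^{\P/\log}_\calR$-complete. Because $\calR$-reducibility implies $\leqpT$-reducibility, every $\leq^{\P/\log}_\calR$-reduction is in particular a $\leqpTadv{\log}$-reduction (again the logarithmic advice simply carries over), so $A$ would be $\leqpTadv{\log}$-complete. Hirahara's result (Theorem \ref{th:upper-T}) then forces $A$ to be $\leqpT$-complete, contradicting the conclusion of Theorem \ref{th:polyadv-vs-T}, and this contradiction completes the argument.

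There is no substantive obstacle here; the content is entirely absorbed into the sandwiching hypothesis on $\calR$. The only point requiring care is to check that the implications between the underlying reducibilities are preserved when advice is added, and in the right asymmetric fashion --- polynomial advice on the strong side where $\calR$ lies above $\leqpm$, and logarithmic advice on the weak side where $\calR$ lies below $\leqpT$ --- so that Hirahara's logarithmic-advice collapse can be applied exactly where it is needed. Both of these checks are immediate from the definition of advice-bounded reductions, since replacing the inner map of one reduction notion by the (more general) inner map of another leaves the advice function untouched.
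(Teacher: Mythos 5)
Your proposal is correct and matches the paper's own argument: the paper proves Theorem \ref{th:hierarchy} by taking the set from Theorem \ref{th:polyadv-vs-T} and applying Hirahara's Theorem \ref{th:upper-T} to rule out $\leqpTadv{\log}$-completeness, and the corollary is then obtained by exactly the sandwiching observation you make (many-one reductions with advice are $\calR$-reductions with the same advice, and $\calR$-reductions with advice are Turing reductions with the same advice). Your write-up is, if anything, slightly more careful than the paper, which states the corollary without an explicit proof.
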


It is natural to ask if we can separate completeness notions above
$\Ppoly$ many-one. We observe that for this, we will need stronger
hypotheses than we have considered in this paper.
\begin{proposition} If there is a $\leq^{\Ppoly}_{\T}$-complete set that is
  not $\leqpmadv{\poly}$-complete in $\NP$, then $\NP \not\subseteq
  \Ppoly$.
\end{proposition}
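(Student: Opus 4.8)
The plan is to prove the contrapositive: assuming $\NP \subseteq \Ppoly$, I would show that every $\leq^{\Ppoly}_{\T}$-complete set for $\NP$ is in fact already $\leqpmadv{\poly}$-complete, so that no set of the hypothesized kind can exist. The guiding observation is that once $\NP \subseteq \Ppoly$, the adaptive Turing structure of the given reduction becomes irrelevant, because a polynomial advice string is already powerful enough to carry an entire decision procedure for any $\NP$ set.

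Concretely, suppose $\NP \subseteq \Ppoly$ and let $C$ be $\leq^{\Ppoly}_{\T}$-complete for $\NP$; in particular $C \in \NP$, and as is standard for complete sets we take $C$ to be nontrivial, fixing strings $c_1 \in C$ and $c_0 \notin C$. Let $B \in \NP$ be arbitrary. Since $\NP \subseteq \Ppoly$, the set $B$ has a polynomial-size circuit family $\{D_n\}$. I would then build a $\leqpmadv{\poly}$-reduction from $B$ to $C$ that bypasses $C$'s Turing-completeness entirely: for inputs of length $n$ let the advice $h(n)$ be an encoding of the circuit $D_n$, and define the reduction function on $x$ to evaluate $D_{|x|}$ on $x$ in polynomial time, outputting $c_1$ if the circuit accepts and $c_0$ otherwise. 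Then $x \in B \iff f(x,h(|x|)) = c_1 \in C$, the advice is polynomially bounded, and $f$ is polynomial-time computable, so $B \leqpmadv{\poly} C$. As $B$ was arbitrary, $C$ is $\leqpmadv{\poly}$-complete, contradicting the hypothesis that a $\leq^{\Ppoly}_{\T}$-complete but not $\leqpmadv{\poly}$-complete set exists; hence $\NP \not\subseteq \Ppoly$.

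The argument is short precisely because the hypothesis $\NP \subseteq \Ppoly$ does all the work, and I do not expect a genuine combinatorial obstacle. The one point that needs care is the triviality convention: under $\NP \subseteq \Ppoly$ the empty and full sets are vacuously $\leq^{\Ppoly}_{\T}$-complete yet are not $\leqpmadv{\poly}$-complete, so the statement must be read with the usual understanding that complete sets are nontrivial, which is exactly what lets me fix the two routing strings $c_0, c_1$ in the construction above. That is the only subtlety I anticipate having to flag.
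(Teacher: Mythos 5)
Your proof is correct and takes essentially the same approach as the paper, whose entire proof is the one-line observation that if $\NP \subseteq \Ppoly$ then every set in $\NP$ is $\leqpmadv{\poly}$-complete --- the circuit-as-advice construction you spell out (encode a circuit for $B$ in the advice, evaluate it, and route to fixed strings $c_1 \in C$, $c_0 \notin C$) is precisely the implicit justification for that claim. Your remark about the nontriviality convention is a fair point of care that the paper leaves unstated.
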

\begin{proof}
If $\NP \subseteq \Ppoly$, then every set in $\NP$ is $\leqpmadv{\poly}$-complete.
\end{proof}

The measure hypothesis and the $\NP$-machine hypothesis are not known
to imply $\NP \not\subseteq \Ppoly$. If it is possible to separate
completeness notions above $\leqpmadv{\poly}$, it appears an
additional hypothesis at least as strong as $\NP \not\subseteq \Ppoly$
-- such as $\PH$ is infinite -- would be needed.

\bibliographystyle{plain}
\bibliography{main,rbm,dim}

\end{document}